\title{\textbf{Marcinkiewicz's strong law of large numbers for non-additive expectation}}
\author{Lixin Zhang, {Jinghang Lin$^*$}\\
\small \emph {School of Mathematical Sciences, Zhejiang University, Hangzhou 310027, China}}
\date{\small \emph {Email addresses: stazlx@zju.edu.cn(L.-X.Zhang), staljh@zju.edu.cn(J.-H. Lin)}}
\newtheorem{thm}{Theorem}
\newtheorem{deft}[thm]{Definition}
\newtheorem{lem}{Lemma}
\begin{document}
\maketitle

\begin{abstract}
The sub-linear expectation space is a nonlinear expectation space having advantages of modelling the uncertainty of probability and distribution. In the sub-linear expectation space, we use capacity and sub-linear expectation to replace probability and  expectation of classical probability theory. In this paper, the method of selecting subsequence is used to prove Marcinkiewicz type strong law of large numbers under sub-linear expectation space. This result is a natural extension of the classical Marcinkiewicz's strong law of large numbers to the case where the expectation is nonadditive. In addition, this paper also gives a theorem about convergence of a random series.

\textbf{keywords: strong law of large numbers,
capacity, nonlinear expectation }
\end{abstract}

\section{Introduction}
Under the frame of classical probability theory established by Kolmogorov, probability and expectation are both linear. To model uncertain phenomena in many areas, like finance, economics and insurance, sublinear expectation is introduced by Peng \cite{peng08a}. We use the notion of independence and identical distribution introduced by Peng. The main purpose of this paper is to establish Marcinkiewicz's strong law of large numbers under nonlinear expectation. Chen \cite{Chen13a} obtained Kolmogorov's strong law of large numbers for i.i.d. random variables under the condition of $1+\alpha $-moments. Zhang \cite{zhang16c} obtained the same result when the corresponding Choquet integral is finite. This can be regarded as a generalization of Kolomogorov's strong law of large numbers.

Our paper is organized as follows: we introduce some basic settings and notations in section 2. In section 3, under the frame introduced by Peng \cite{peng08a},we establish Marcinkiewicz's strong law of large numbers. Then, we introduce a theorem about convergence of random series.

\section{Basic Setting}
 We use notations of Peng \cite{peng08a}. Given a positive integer number \(n\), we denote by \(\left\langle \mathbf{x,y} \right\rangle \) the scalar product of \(\mathbf{x,y}\in {{R}^{n}}\) and by\(\left| \mathbf{x} \right|={{\left\langle \mathbf{x,x} \right\rangle}^{1/2}}\) the Euclidean norm of \(\mathbf{x}\).

Let \(\left( \Omega ,\mathcal {F} \right)\) be a measurable space and \(\mathcal {H}\) be a linear space of real functions defined on \(\left( \Omega ,\mathcal {F} \right)\) such that \(\varphi \left( {{X}_{1}},{{X}_{2}},\cdots ,{{X}_{n}} \right)\in \mathcal {H}\), for any \({{X}_{1}},{{X}_{2}},\cdots ,{{X}_{n}}\in \mathcal {H}\), \(\varphi \in {{C}_{l,Lip}}\left( {{R}^{n}} \right)\). where \({{C}_{l,Lip}}\left( {{R}^{n}} \right)\) denotes the linear space of local Lipschitz functions \(\varphi \) satisfying

\begin{center}
\(\left| \varphi \left( \mathbf{x} \right)-\varphi \left( \mathbf{y} \right) \right|\le C\left( 1+{{\left| \mathbf{x} \right|}^{m}}+{{\left| \mathbf{y} \right|}^{m}} \right)\left| \mathbf{x}-\mathbf{y} \right|,\ \forall \mathbf{x,y}\in {{R}^{n}},\)
\end{center}
for some \(C>0\), \(m\in N\) depending on \(\varphi \).

\(\mathcal {H}\) is considered as a space of "random variables".
\begin{deft}
 \(\hat{E}:\mathcal {H}\to \left( -\infty ,+\infty  \right)\) is called sublinear expectation, if \(\hat{E}\) satisfies the following properties: \(\forall X,Y\in \mathcal {H}\), we have
\begin{enumerate}
    \item Monotonicity: If \(X\ge Y\), then \(\hat{E}\left( X \right)\ge \hat{E}\left( Y \right)\).
    \item Constant preserving: \(\hat{E}\left( c \right)=c\).
    \item Sub-additivity: \(\hat{E}\left( X+Y \right)\le \hat{E}\left( X \right)+\hat{E}\left( Y \right)\).
    \item Positive homogeneity: \(\hat{E}\left( \lambda X \right)=\lambda \hat{E}\left( X \right), \lambda \ge 0\).
  \end{enumerate}
\end{deft}

The triple \((\Omega ,\mathcal {H},\hat{E} )\) is called a sub-linear expectation space, compared with the classical probability space \(\left( \Omega ,\mathcal {F},P \right)\). For convenience, we also call \(\hat{E}\) a nonlinear expectation.

In a sub-linear expectation space, we replace the concept of probability with the concept of capacity.
\\
\begin{deft}

A set function \(V:\mathcal {F}\to \left[0,1 \right]\) is called a capacity, if

1. \(V\left( \phi  \right)=0,V\left( \Omega  \right)=1\).

2. \(V\left( A \right)\le V\left( B \right),\forall A\subset B,A,B\in \mathcal {F}\).\\
In addition, if \(V\) is continuous,then \(V\) should also satisfy

3. \(V\left( {{A}_{n}} \right)\uparrow V\left( A \right)\),if \({{A}_{n}}\uparrow A\).

4. \(V\left( {{A}_{n}} \right)\downarrow V\left( A \right)\),if \({{A}_{n}}\downarrow A\), where \({{A}_{n}},A\in \mathcal {F}\).\\
A sub-linear expectation \(\hat{E}\) could generate  a pair of capacity denoted by \[\hat{V}\left( A \right)=\hat{E}\left( {{I}_{A}} \right), \hat{v}\left( A \right)=-\hat{E}\left( {{-I}_{A}} \right), if{{I}_{A}}\in \mathcal{H}.\]
We also have the fact
\begin{center}
  \(\hat{E}\left( f \right)\le V\left( A \right)\le \hat{E}\left( g \right),\hat{\varepsilon }\left( f \right)\le v\left( A \right)\le \hat{\varepsilon }\left( g \right),if \ f\le {{I}_{A}}\le g,f,g\in \mathcal{H}\)
\end{center}
It is easy to check that \(\hat{V}(A)+\hat{v}(A)=1.\)
\end{deft}
We define the Choquet integrals/expectations by\\
\centerline{\({{C}_{V}}\left( X \right)=\int_{0}^{\infty }{V\left( X\ge t \right)}dt+\int_{-\infty }^{0}{\left( V\left( X\ge t \right)-1 \right)}dt.\)}
where \(V\)can also be replaced by \(v\).

Under some conditions, the Borel-Cantelli lemma is still true for capacity.

\begin{lem}\cite{Chen13a}
Let \(\left\{ {{A}_{n}},n\ge 1 \right\}\) be a sequence of events in \(\mathcal {F}\). If \(V\) is continuous and \(\sum\limits_{n=1}^{\infty }{V\left( {{A}_{n}} \right)}<+\infty \), then \(V\left( \underset{n=1}{\overset{+\infty }{\mathop{\bigcap }}}\,\underset{i=n}{\overset{+\infty }{\mathop{\bigcup }}}\,{{A}_{i}} \right)=0\).
\end{lem}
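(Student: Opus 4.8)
The plan is to mimic the classical first Borel--Cantelli argument, replacing $\sigma$-additivity by the continuity of $V$ together with the finite subadditivity inherited from the sublinear expectation. Write $B_n=\bigcup_{i=n}^{\infty}A_i$ for the tail unions, so that $B_n\downarrow B=\bigcap_{n=1}^{\infty}\bigcup_{i=n}^{\infty}A_i$ as $n\to\infty$; the goal is to show $V(B)=0$.

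First I would establish a countable subadditivity bound on each tail. Since $V(A)=\hat{E}(I_A)$ and $I_{A\cup C}\le I_A+I_C$, the monotonicity and sub-additivity of $\hat{E}$ give finite subadditivity of $V$, namely
\[
V\Big(\bigcup_{i=n}^{m}A_i\Big)\le\sum_{i=n}^{m}V(A_i)
\]
for every finite $m$. Setting $B_n^{(m)}=\bigcup_{i=n}^{m}A_i$, we have $B_n^{(m)}\uparrow B_n$ as $m\to\infty$, so continuity from below (property 3) yields $V(B_n^{(m)})\uparrow V(B_n)$. Passing to the limit in the finite bound gives
\[
V(B_n)\le\sum_{i=n}^{\infty}V(A_i).
\]

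Next I would exploit continuity from above. Because $B_n\downarrow B$, property 4 gives $V(B_n)\downarrow V(B)$, hence
\[
V(B)=\lim_{n\to\infty}V(B_n)\le\lim_{n\to\infty}\sum_{i=n}^{\infty}V(A_i)=0,
\]
the last equality being the vanishing of the tail of the convergent series $\sum_{i=1}^{\infty}V(A_i)<+\infty$. Since $V\ge 0$, this forces $V(B)=0$, which is exactly the claim.

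The step I expect to be the main obstacle is the promotion of finite subadditivity to the countable tail bound $V(B_n)\le\sum_{i=n}^{\infty}V(A_i)$: this is precisely where the continuity hypothesis on $V$ is indispensable, since for a merely monotone, finitely subadditive capacity one cannot in general pass from finite to infinite unions. The remaining steps are routine monotone passages to the limit.
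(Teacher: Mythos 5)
Your proposal is correct and follows essentially the same route as the paper: bound the tail union by the tail sum via (countable) subadditivity and let the tail of the convergent series vanish. The paper's version is terser—it simply invokes monotonicity to get $V(\bigcap_n B_n)\le V(B_n)$ rather than continuity from above, and asserts the countable subadditivity step that you carefully justify via finite subadditivity of $\hat{E}(I_A)$ plus continuity from below.
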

\begin{proof}
By the monotonicity and continuity, it follows that\\
\(0\le V\left( \underset{n=1}{\overset{+\infty }{\mathop{\bigcap }}}\,\underset{i=n}{\overset{+\infty }{\mathop{\bigcup }}}\,{{A}_{i}} \right)\le V\left( \underset{i=n}{\overset{+\infty }{\mathop{\bigcup }}}\,{{A}_{i}} \right)\le \sum\limits_{i=n}^{\infty }{V\left( {{A}_{i}} \right)}\to 0\), if \(n\to \infty \).
\end{proof}

\begin{deft}\cite{peng08a}(Independence) In a sublinear expectation space \(( \Omega ,\mathcal {H},\hat{E} )\), for each test function \(\varphi \in {{C}_{l,Lip}}\left( {{R}^{m}}\times {{R}^{n}} \right)\),\(\mathbf{Y}=\left( {{Y}_{1}},\cdots ,{{Y}_{n}} \right),{{Y}_{i}}\in \mathcal {H}\) is said to be independent to \(\mathbf{X}=\left( {{X}_{1}},\cdots ,{{X}_{m}} \right), {{X}_{i}}\in \mathcal {H}\),
if we have \(\hat{E}\left( \varphi \left( \mathbf{X,Y} \right) \right)=\hat{E}\left( {{\left. \hat{E}\left( \varphi \left( \mathbf{x,Y} \right) \right) \right|}_{\mathbf{x=X}}} \right),\)
whenever
\(\bar{\varphi }\left( \mathbf{x} \right):=\hat{E}\left[ \left| \varphi \left(\mathbf{ x,Y} \right) \right| \right]<\infty \) for all \(\mathbf{x}\) and \(\hat{E}\left[ \left| \varphi \mathbf{X} \right| \right]<\infty \).

\end{deft}

 \begin{deft}\cite{peng08a}(Identical distribution)
  Let \(\mathbf{{{X}_{1}}}\) and \(\mathbf{{{X}_{2}}}\) be two \(n\)-dimensional random vectors defined respectively in sublinear expectation spaces \(({{\Omega }_{1}},{{\mathcal {H}}_{1}},{{{\hat{E}}}_{1}} )\) and \(( {{\Omega }_{2}},{{\mathcal {H}}_{2}},{{{\hat{E}}}_{2}} )\). \(\mathbf{{{X}_{1}}}\) and \(\mathbf{{{X}_{2}}}\) are called identically distributed, denoted by \(\mathbf{{{X}_{1}}}\overset{d}{\mathop{=}}\,\mathbf{{{X}_{2}}}\), if\\
\centerline{\({{\hat{E}}_{1}}\left( \varphi \left( \mathbf{{{X}_{1}}} \right) \right)={{\hat{E}}_{2}}\left( \varphi \left( \mathbf{{{X}_{2}}} \right) \right), \forall \varphi \in {{C}_{l,Lip}}\left( {{R}^{n}} \right),\)}
whenever the sub-expectation is finite.
\end{deft}

\begin{deft}\cite{peng08a}(IID random variables) A sequence of random variables \(\left\{ {{X}_{n}};n\ge 1 \right\}\)is said to be independent, if ${X}_{i+1}$ is independent to \(\left( {{X}_{1}},\cdots ,{{X}_{i}} \right)\) for each \(i\ge 1\). It is said to be identically distributed, if \({{X}_{i}}\overset{d}{=}\,{{X}_{1}}\) for each \(i\ge 1.\)

\end{deft}

\section{Main results}
In classical probability theory, the proof of Marcinkiewicz's strong law of large numbers depends on the three series theorem. In this paper, we use the method of picking up subsequence to prove it.
First, we prove a lemma.
\begin{lem}Let \({{C}_{V}}\left( {{\left| X \right|}^{p}} \right)<\infty ,1\le p<2\). \(X\) is a random variable defined on sublinear expectation space \(( \Omega ,\mathcal {H},\hat{E} )\). Then

\begin{center}
\(\sum\limits_{j=1}^{\infty }{\frac{\hat{E}\left( {{\left( \left| X \right|\wedge {{j}^{{1}/{p}\;}} \right)}^{2}} \right)}{{{j}^{{2}/{p}\;}}}}<\infty \).
\end{center}
\end{lem}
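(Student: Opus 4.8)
The plan is to transfer the problem from the sub-linear expectation $\hat E$ to the Choquet integral $C_V$, which is genuinely monotone and obeys the classical layer-cake formula, and then to run the standard classical computation. Throughout write $Y=|X|\ge 0$ and $a_j=j^{1/p}$, so that the $j$-th summand is $\hat E\big((Y\wedge a_j)^2\big)/j^{2/p}$ with $(Y\wedge a_j)^2$ a bounded nonnegative random variable.

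The first step, which I regard as the conceptual heart of the argument, is the one-sided bound $\hat E(Z)\le C_V(Z)$ for any bounded nonnegative $Z\in\mathcal H$. To obtain it, fix a partition $0=t_0<t_1<\cdots<t_N$ of an interval containing the range of $Z$ and use the pointwise inequality $Z\le\sum_{k=1}^N (t_k-t_{k-1})I_{\{Z>t_{k-1}\}}$. Monotonicity together with sub-additivity and the relation $V(A)=\hat E(I_A)$ then yields $\hat E(Z)\le\sum_{k=1}^N (t_k-t_{k-1})V(Z>t_{k-1})$, and letting the mesh tend to $0$ the right-hand side converges to $\int_0^\infty V(Z\ge t)\,dt=C_V(Z)$ (the functions $V(Z>t)$ and $V(Z\ge t)$ differ only on an at most countable set, so the two integrals agree). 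Applying this with $Z=(Y\wedge a_j)^2$ and evaluating the Choquet integral directly—for $t\le a_j^2$ the event $\{(Y\wedge a_j)^2\ge t\}$ is $\{Y\ge\sqrt t\}$, while for $t>a_j^2$ it is empty—the substitution $s=\sqrt t$ gives $\hat E\big((Y\wedge a_j)^2\big)\le C_V\big((Y\wedge a_j)^2\big)=\int_0^{a_j}2s\,V(Y\ge s)\,ds$.

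It then remains to bound a purely deterministic double sum/integral. Inserting this estimate and exchanging summation and integration (legitimate by Tonelli, since all terms are nonnegative) gives $\sum_{j\ge1}j^{-2/p}\int_0^{a_j}2s\,V(Y\ge s)\,ds=\int_0^\infty 2s\,V(Y\ge s)\big(\sum_{j:\,a_j\ge s}j^{-2/p}\big)\,ds$. Since $a_j\ge s$ is equivalent to $j\ge s^p$ and $2/p>1$ (here the hypothesis $p<2$ enters essentially), the inner sum is bounded by the convergent constant $K:=\sum_{j\ge1}j^{-2/p}$ when $s\le1$, and, by comparison with $\int_{s^p}^\infty x^{-2/p}\,dx$, by a constant multiple of $s^{p-2}$ when $s\ge1$. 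The contribution of $s\le1$ is then at most $K\int_0^1 2s\,ds<\infty$ because $V\le1$, and the contribution of $s\ge1$ is at most a constant times $\int_1^\infty s^{p-1}V(Y\ge s)\,ds$. Finally the substitution $t=s^p$ identifies $\int_0^\infty p\,s^{p-1}V(Y\ge s)\,ds$ with $C_V(Y^p)=C_V(|X|^p)<\infty$, so both pieces are finite and the series converges.

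The main obstacle is really the first step: because $\hat E$ is only sub-additive, one cannot equate it with a Choquet integral, and the inequality $\hat E(Z)\le C_V(Z)$ must be extracted from the defining properties of $\hat E$ (monotonicity, constant preservation, sub-additivity) through the Riemann-sum approximation above. Once the problem has been transferred to $C_V$, the remaining estimates are exactly the classical ones, and the roles of the moment condition $C_V(|X|^p)<\infty$ and of the restriction $p<2$ become transparent.
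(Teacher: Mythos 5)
Your proof is correct and reaches the desired bound, but by a technically different route from the paper's. The paper works entirely discretely: it decomposes $(|X|\wedge j^{1/p})^2$ over the integer layers $\{(i-1)^{1/p}<|X|\le i^{1/p}\}$, performs an Abel summation to arrive at $\hat E\big((|X|\wedge j^{1/p})^2\big)\le 1+\sum_{i=1}^{j}\big((i+1)^{2/p}-i^{2/p}\big)V(|X|>i^{1/p})$, interchanges the two sums, and then controls the coefficient $\big((i+1)^{2/p}-i^{2/p}\big)/i^{2/p-1}$ by an explicit monotonicity-and-limit argument before comparing $\sum_i V(|X|>i^{1/p})$ with $C_V(|X|^p)$. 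You instead isolate the general inequality $\hat E(Z)\le C_V(Z)$ for bounded nonnegative $Z$ --- your Riemann-sum argument is precisely the continuous analogue of the paper's Abel summation --- after which the whole computation lives at the level of Choquet integrals and the classical layer-cake/Tonelli manipulation. Your packaging is more modular and makes transparent where $p<2$ and $C_V(|X|^p)<\infty$ enter; the paper's discrete version avoids taking a limit over partitions and the (minor) issue of $V(Z>t)$ versus $V(Z\ge t)$, at the cost of a more opaque coefficient estimate. One shared caveat rather than a gap in your argument: both proofs apply $\hat E$ directly to indicator functions (equivalently assume $I_A\in\mathcal H$ and $V(A)=\hat E(I_A)$), which in the strict $C_{l,Lip}$ framework should be justified by sandwiching indicators between smooth truncations as the paper does in Definition 2; since the paper's own proof takes the same liberty, this does not count against you.
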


\begin{proof}
By noting that
\begin{align*}
  & {{\left( \left| X \right|\wedge {{j}^{{1}/{p}\;}} \right)}^{2}}=\sum\limits_{i=1}^{j}{{{\left| X \right|}^{2}}I\left( {{\left( i-1 \right)}^{{1}/{p}\;}}<\left| X \right|\le {{i}^{{1}/{p}\;}} \right)}+{{j}^{{2}/{p}\;}}I\left( \left| X \right|>{{j}^{{1}/{p}\;}} \right) \\
 & \ \ \ \ \ \ \ \ \ \ \ \ \ \ \ \ \ \ \le \sum\limits_{i=1}^{j}{{{i}^{{2}/{p}\;}}I\left( {{\left( i-1 \right)}^{{1}/{p}\;}}<\left| X \right|\le {{i}^{{1}/{p}\;}} \right)}+{{j}^{{2}/{p}\;}}I\left( \left| X \right|>{{j}^{{1}/{p}\;}} \right) \\
 & \ \ \ \ \ \ \ \ \ \ \ \ \ \ \ \ \ \ =\sum\limits_{i=0}^{j-1}{{{\left( i+1 \right)}^{{2}/{p}\;}}I\left( \left| X \right|>{{i}^{{1}/{p}\;}} \right)}-\sum\limits_{i=1}^{j}{{{i}^{{2}/{p}\;}}I\left( \left| X \right|>{{i}^{{1}/{p}\;}} \right)}+{{j}^{{2}/{p}\;}}I\left( \left| X \right|>{{j}^{{1}/{p}\;}} \right) \\
 & \ \ \ \ \ \ \ \ \ \ \ \ \ \ \ \ \ \ = 1+\sum\limits_{i=1}^{j-1}{\left( {{\left( i+1 \right)}^{{2}/{p}\;}}-{{i}^{{2}/{p}\;}} \right)}I\left( \left| X \right|>{{i}^{{1}/{p}\;}} \right)\\
 & \ \ \ \ \ \ \ \ \ \ \ \ \ \ \ \ \ \ \le 1+\sum\limits_{i=1}^{j}{\left( {{\left( i+1 \right)}^{{2}/{p}\;}}-{{i}^{{2}/{p}\;}} \right)}I\left( \left| X \right|>{{i}^{{1}/{p}\;}} \right),
\end{align*}
we have,
\begin{center}
\(\hat{E}\left( {{\left( X\wedge {{j}^{{1}/{p}\;}} \right)}^{2}} \right)\le \ 1+\sum\limits_{i=1}^{j}{\left( {{\left( i+1 \right)}^{{2}/{p}\;}}-{{i}^{{2}/{p}\;}} \right)}V\left( \left| X \right|>{{i}^{{1}/{p}\;}} \right)\).
\end{center}
Hence,
\begin{align*}
  & \sum\limits_{j=1}^{\infty }{\frac{\hat{E}\left( {{\left( X\wedge {{j}^{{1}/{p}\;}} \right)}^{2}} \right)}{{{j}^{{2}/{p}\;}}}}\le \ \sum\limits_{j=1}^{\infty }{\frac{1}{{{j}^{{2}/{p}\;}}}}+\sum\limits_{j=1}^{\infty }{{{j}^{{-2}/{p}\;}}\sum\limits_{i=1}^{j}{\left( {{\left( i+1 \right)}^{{2}/{p}\;}}-{{i}^{{2}/{p}\;}} \right)}V\left( \left| X \right|>{{i}^{{1}/{p}\;}} \right)} \\
 & \ \ \ \ \ \ \ \ \ \ \ \ \ \ \ \ \ \ \ \ \ \ \ \ \ \ \ \ \le \sum\limits_{j=1}^{\infty }{\frac{1}{{{j}^{{2}/{p}\;}}}}+\sum\limits_{i=1}^{\infty }{\left( {{\left( i+1 \right)}^{{2}/{p}\;}}-{{i}^{{2}/{p}\;}} \right)V\left( \left| X \right|>{{i}^{{1}/{p}\;}} \right)\sum\limits_{j=i}^{\infty }{{{j}^{{-2}/{p}\;}}}} \\
 & \ \ \ \ \ \ \ \ \ \ \ \ \ \ \ \ \ \ \ \ \ \ \ \ \ \ \ \ \le \sum\limits_{j=1}^{\infty }{\frac{1}{{{j}^{{2}/{p}\;}}}}+C\sum\limits_{i=1}^{\infty }{\frac{\left( {{\left( i+1 \right)}^{{2}/{p}\;}}-{{i}^{{2}/{p}\;}} \right)}{{{i}^{{2}/{p}\;-1}}}V\left( \left| X \right|>{{i}^{{1}/{p}\;}} \right)},
\end{align*}
where \(C\) is a positive constant.\\
Let \(f\left( x \right)=\frac{{{\left( x+1 \right)}^{{2}/{p}\;}}-{{x}^{{2}/{p}\;}}}{{{x}^{{2}/{p}\;-1}}}=x{{\left( 1+\frac{1}{x} \right)}^{{2}/{p}\;}}-x, x\ge 1\).

Since \({f}'\left( x \right)={{\left( 1+\frac{1}{x} \right)}^{{2}/{p}\;-1}}-1>0\),
\(f\left( x \right)\) is increasing.
And\(\underset{x\to \infty }{\mathop{\lim }}\,f\left( x \right)=\frac{2}{p}\), so,
\begin{align*}
  & \sum\limits_{j=1}^{\infty }{\frac{\hat{E}\left( {{\left( X\wedge {{j}^{{1}/{p}\;}} \right)}^{2}} \right)}{{{j}^{{2}/{p}\;}}}}\le \sum\limits_{j=1}^{\infty }{\frac{1}{{{j}^{{2}/{p}\;}}}}+C\sum\limits_{i=1}^{\infty }{\frac{\left( {{\left( i+1 \right)}^{{2}/{p}\;}}-{{i}^{{2}/{p}\;}} \right)}{{{i}^{{2}/{p}\;-1}}}V\left( \left| X \right|>{{i}^{{1}/{p}\;}} \right)} \\
 & \ \ \ \ \ \ \ \ \ \ \ \ \ \ \ \ \ \ \ \ \ \ \ \ \ \ \ \ \le \sum\limits_{j=1}^{\infty }{\frac{1}{{{j}^{{2}/{p}\;}}}}+C\cdot \frac{2}{p}\sum\limits_{i=1}^{\infty }{V\left( \left| X \right|>{{i}^{{1}/{p}\;}} \right)}\\
 & \ \ \ \ \ \ \ \ \ \ \ \ \ \ \ \ \ \ \ \ \ \ \ \ \ \ \ \ \le \sum\limits_{j=1}^{\infty }{\frac{1}{{{j}^{{2}/{p}\;}}}}+C\cdot \frac{2}{p}\cdot {{C}_{V}}\left( {{\left| X \right|}^{p}} \right)<\infty.
\end{align*}

\end{proof}
\begin{thm}\cite{zhang16c} Suppose that \({X}_{1},...,{{X}_{n}}\) is a sequence of independent random variable and
\(\hat{E}\left( {{X}_{k}} \right)\le 0,k=1,\cdots ,n.\) Then
\begin{center}
\(\hat{E}\left( {{\left| \underset{k\le n}{\mathop{\max }}\,\left( {{S}_{n}}-{{S}_{k}} \right) \right|}^{p}} \right)\le {{2}^{2-p}}\sum\limits_{k=1}^{n}{\hat{E}\left( {{\left| {{X}_{k}} \right|}^{p}} \right)},1\le p\le 2,\)
\end{center}
where \({{S}_{k}}={{X}_{1}}+\cdots +{{X}_{k}}.\)

In particular, \(\hat{E}\left( {{\left( S_{n}^{+} \right)}^{p}} \right)\le {{2}^{2-p}}\sum\limits_{k=1}^{n}{\hat{E}\left( {{\left| {{X}_{k}} \right|}^{p}} \right)},1\le p\le 2.\)

\end{thm}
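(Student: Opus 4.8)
The plan is to reduce the maximal inequality to an inductive estimate on a single nonnegative sequence, and then to invoke independence together with the sign hypothesis \(\hat{E}(X_k)\le 0\) to discard the linear cross terms. The engine of the whole argument is the elementary numerical inequality: for \(1\le p\le 2\), every \(x\ge 0\) and every real \(y\),
\[
\bigl((x+y)^+\bigr)^p\le x^p+px^{p-1}y+2^{2-p}|y|^p .
\]
I would prove this by a one–variable calculus argument. After the scaling \(t=y/x\) the case \(y\ge 0\) reduces to \(\psi(t)=1+pt+2^{2-p}t^p-(1+t)^p\ge 0\) on \(t\ge 0\), which follows from \(\psi(0)=\psi'(0)=0\) and \(\psi''\ge 0\) (the latter because \(2^{2-p}\ge 1\ge (1+1/t)^{p-2}\)); the cases \(y<0\) (both \(x+y\ge 0\) and \(x+y<0\)) are handled by monotonicity of the corresponding one–variable functions. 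The constant \(2^{2-p}\) is sharp and gives equality at \(p=2\). I expect producing this sharp inequality to be the technical heart of the proof.

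Next I would set \(M_n=\max_{0\le k\le n}(S_n-S_k)\) with \(S_0=0\), so that \(M_n\ge 0\) and, since the maximum includes \(k=0\) and \(k=n\), also \(M_n\ge \max(S_n,0)=S_n^+\). Separating the term \(k=n\) (which contributes \(0\)) and factoring \(X_n\) out of the remaining terms yields the Lindley-type recursion
\[
M_n=\bigl(X_n+M_{n-1}\bigr)^+,\qquad M_0=0 .
\]
This is the decisive observation: it replaces the maximum by a single nonnegative functional to which the elementary inequality applies with \(x=M_{n-1}\ge 0\).

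The induction then runs as follows. Applying the elementary inequality with \(x=M_{n-1}\), \(y=X_n\) gives pointwise \(M_n^p\le M_{n-1}^p+pM_{n-1}^{p-1}X_n+2^{2-p}|X_n|^p\). Taking \(\hat{E}\) and using sub-additivity and positive homogeneity isolates the term \(2^{2-p}\hat{E}(|X_n|^p)\). For the remaining part I would use that \(X_n\) is independent of \((X_1,\dots,X_{n-1})\), hence of \(M_{n-1}=g(X_1,\dots,X_{n-1})\): treating \(M_{n-1}\) as the outer variable and \(X_n\) as the independent one, and noting \(pM_{n-1}^{p-1}\ge 0\) together with \(\hat{E}(X_n)\le 0\),
\[
\hat{E}\bigl[s^p+ps^{p-1}X_n\bigr]\Big|_{s=M_{n-1}}=\bigl(s^p+ps^{p-1}\hat{E}(X_n)\bigr)\Big|_{s=M_{n-1}}\le M_{n-1}^p ,
\]
so that \(\hat{E}\bigl[M_{n-1}^p+pM_{n-1}^{p-1}X_n\bigr]\le \hat{E}(M_{n-1}^p)\) by monotonicity. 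Combining the pieces gives \(\hat{E}(M_n^p)\le \hat{E}(M_{n-1}^p)+2^{2-p}\hat{E}(|X_n|^p)\), and induction on \(n\) yields \(\hat{E}(M_n^p)\le 2^{2-p}\sum_{k=1}^n\hat{E}(|X_k|^p)\). The ``in particular'' statement is then immediate from \(M_n\ge S_n^+\ge 0\) and monotonicity of \(\hat{E}\).

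The main obstacle I anticipate is the correct handling of the cross term. Independence in the sub-linear framework is asymmetric, so one must condition on \(M_{n-1}\) (a function of \(X_1,\dots,X_{n-1}\)) and not on \(X_n\); this is exactly why reversing the index of the sequence is \emph{not} available and why the Lindley recursion is essential. Moreover, the nonnegativity of \(M_{n-1}\) supplied by that recursion is what makes the coefficient \(pM_{n-1}^{p-1}\) nonnegative, which in turn is what allows the hypothesis \(\hat{E}(X_n)\le 0\) to kill the linear term after positive homogeneity is applied. Together with establishing the sharp elementary inequality, these are the two points that require genuine care.
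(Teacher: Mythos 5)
The paper does not prove this theorem at all: it is imported verbatim from Zhang (2016) \cite{zhang16c} and used as a black box, so there is no in-paper proof to compare against. Your argument is correct and is, in substance, the proof in that reference: the Lindley recursion \(M_n=(X_n+M_{n-1})^{+}\) with \(M_0=0\), the sharp elementary inequality \(\bigl((x+y)^{+}\bigr)^p\le x^p+px^{p-1}y+2^{2-p}|y|^p\) for \(x\ge 0\), and conditioning on \(M_{n-1}\) rather than on \(X_n\) (correctly exploiting the asymmetry of Peng's independence) are exactly the right ingredients, and the induction closes as you describe. Two small points deserve more care than your sketch gives them. First, for \(1<p<2\) the map \(t\mapsto t^{p-1}\) is not locally Lipschitz at \(t=0\), so \(\varphi(\mathbf{x},y)=g(\mathbf{x})^p+p\,g(\mathbf{x})^{p-1}y\) is not literally in \(C_{l,Lip}\) and the independence identity should be applied to a regularization such as \((M_{n-1}+\delta)^{p-1}\) before letting \(\delta\downarrow 0\). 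Second, your ``monotonicity'' disposal of the case \(y<0\le x+y\) is not a one-liner: with \(x=1\), \(s=-y\), one must check that \(v(s)=2^{2-p}s^{p-1}+(1-s)^{p-1}-1\ge 0\) on \([0,1]\), which holds because \(v(0)=0\), \(v\) increases up to \(s=2/3\) and then decreases to \(v(1)=2^{2-p}-1\ge 0\); the case \(x+y<0\) is genuinely monotone as you say. Neither point is a gap in the idea, only in the write-up.
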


By Lemma 2, we can prove Marcinkiewz's strong law of large numbers.

\begin{thm}
In sublinear expectation space \(( \Omega ,\mathcal {H},\hat{E} )\), \(\left\{ {{X}_{i}} \right\}\) is a sequence of independent and identical random variable. Suppose that \(1\le p<2\), \(\hat{E}\left( {{X}_{i}} \right)=\hat{\varepsilon }\left( {{X}_{i}} \right)=0\), \(\underset{c\to \infty }{\mathop{\lim }}\,\hat{E}\left[ {{\left( {{\left| {{X}_{1}} \right|}^{p}}-c \right)}^{+}} \right]=0\) and $V$ is continuous, then \({{C}_{V}}\left( {{\left| {{X}_{1}} \right|}^{p}} \right)<\infty \) if and only if \({{{S}_{n}}}/{{{n}^{1/p}}}\;\to 0\ \ a.s.\ V.\)
\end{thm}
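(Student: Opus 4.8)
The plan is to prove the two implications separately, the substantial one being sufficiency: $C_V(|X_1|^p)<\infty\Rightarrow S_n/n^{1/p}\to0$ a.s.\ $V$. First I would truncate, putting $Y_i=X_iI(|X_i|\le i^{1/p})$ (or a $C_{l,Lip}$ smoothing, so the truncations remain in $\mathcal H$). Comparing series with integral gives $\sum_{i\ge1}V(|X_i|>i^{1/p})=\sum_{i\ge1}V(|X_1|^p>i)\le C_V(|X_1|^p)<\infty$, so by Lemma 1 (Borel--Cantelli, using continuity of $V$) one has $V(|X_i|>i^{1/p}\ \text{i.o.})=0$. Hence $X_i=Y_i$ for all large $i$, $V$-a.s., $S_n-T_n$ is eventually constant, and it suffices to show $T_n/n^{1/p}\to0$ a.s., where $T_n=\sum_{i=1}^nY_i$.

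Next I would dispose of the centering. From $\hat E(X_1)=\hat\varepsilon(X_1)=0$ and (sub/super)additivity one gets $|\hat E(Y_i)|\le\hat E(|X_1|I(|X_1|>i^{1/p}))$ and an analogous bound for $\hat\varepsilon(Y_i)$; the hypothesis $\hat E[(|X_1|^p-c)^+]\to0$ (which forces $\hat E(|X_1|^p)<\infty$) then feeds a Toeplitz/Kronecker estimate yielding $n^{-1/p}\sum_{i\le n}\hat E(Y_i)\to0$ and $n^{-1/p}\sum_{i\le n}\hat\varepsilon(Y_i)\to0$. This reduces matters to the two centered arrays $Y_i^{(1)}=Y_i-\hat E(Y_i)$ and $Y_i^{(2)}=Y_i-\hat\varepsilon(Y_i)$, which satisfy $\hat E(Y_i^{(1)})=0$ and $\hat E(-Y_i^{(2)})=0$ respectively; this double centering is what allows $T_n^+$ and $T_n^-$ to be controlled separately, since in a non-additive expectation one cannot simply subtract a single mean.

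The core is the subsequence step. Taking dyadic blocks $n_k=2^k$ I would bound, for fixed $\varepsilon>0$, $V(|\sum_{i\le n}Y_i^{(1)}|>\varepsilon n^{1/p}\ \text{i.o.})\le\sum_kV(\max_{n_k\le n<n_{k+1}}|\sum_{i\le n}Y_i^{(1)}|>\varepsilon n_k^{1/p})$, and symmetrically for $Y^{(2)}$. Applying Theorem 3 with exponent $2$ (so that $2^{2-p}$ becomes $1$) to these independent, $\hat E\le0$ variables, together with the Chebyshev inequality for $\hat V$, bounds each block probability by a constant times $n_k^{-2/p}\sum_{i=n_k+1}^{n_{k+1}}\hat E((Y_i^{(1)})^2)$. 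Since $n_k$ and $i$ are comparable on the $k$-th block and $\hat E((Y_i^{(1)})^2)\le C\,\hat E((|X_1|\wedge i^{1/p})^2)$ up to a summable remainder, summation over $k$ collapses to a constant times $\sum_i\hat E((|X_1|\wedge i^{1/p})^2)/i^{2/p}$, finite by Lemma 2; Lemma 1 then gives $n^{-1/p}\sum_{i\le n}Y_i^{(1)}\to0$ a.s., and likewise for $Y^{(2)}$. I expect the main obstacle here to be converting the one-sided maximal quantity of Theorem 3 ($\max_{k\le n}(S_n-S_k)$, essentially an $S^+$-bound) into genuine two-sided control of $\max_n|\sum_{i\le n}Y_i^{(j)}|$, which is precisely why the two separate centerings $Y^{(1)},Y^{(2)}$ are introduced and handled in parallel.

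For the converse, $S_n/n^{1/p}\to0$ a.s.\ gives $X_n/n^{1/p}=(S_n-S_{n-1})/n^{1/p}\to0$ a.s., hence $V(|X_n|>\varepsilon n^{1/p}\ \text{i.o.})=0$ for every $\varepsilon>0$. A second, independence-based Borel--Cantelli lemma for capacities (the converse to Lemma 1, using that the events $\{|X_n|>\varepsilon n^{1/p}\}$ are independent) then forces $\sum_nV(|X_1|^p>\varepsilon n)<\infty$, and comparing this series with the defining integral yields $C_V(|X_1|^p)<\infty$; the delicate point on this side is having such an independence Borel--Cantelli for the relevant capacity at one's disposal.
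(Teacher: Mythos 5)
Your forward direction follows essentially the same route as the paper: truncation at level $i^{1/p}$ (the paper uses $Y_i=(-i^{1/p})\vee(X_i\wedge i^{1/p})$, which is automatically a $C_{l,Lip}$ function of $X_i$ and so avoids the smoothing you allude to), Borel--Cantelli via $\sum_i V(|X_1|^p>i)\le C_V(|X_1|^p)$, the centering estimate $|\hat E Y_i|\le 2i^{1/p-1}\hat E(|X_1|^pI(|X_1|^p>i))=o(i^{1/p-1})$ driven by the hypothesis $\hat E[(|X_1|^p-c)^+]\to 0$, geometric blocks, the one-sided second-moment maximal inequality plus Chebyshev, and summability via Lemma 2. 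Your device of the two centerings $Y_i-\hat E(Y_i)$ and $Y_i-\hat\varepsilon(Y_i)$ is exactly equivalent to the paper's handling of $\{X_i\}$ and $\{-X_i\}$ in parallel, and correctly resolves the one-sidedness of the maximal inequality. This half is sound.

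The genuine gap is in the converse. You reduce it, correctly, to a second (divergence) Borel--Cantelli lemma for capacities under independence, and you yourself flag this as "the delicate point" --- but you then assume it rather than prove it, and in the sublinear-expectation framework it is not an off-the-shelf result. It is precisely what the paper spends the entire second half of its proof constructing, and none of the ingredients is automatic: (i) the events $\{|X_j|>(Mj)^{1/p}\}$ must be replaced by smooth test functions $g_{1/2}(|X_j|/(Mj)^{1/p})\in C_{l,Lip}$, since indicators need not lie in $\mathcal H$ and Peng's independence is only defined through such test functions; (ii) one bounds the \emph{lower} capacity $v(\sum_{j\le n}g_{1/2}(\cdot)<l)$ by an exponential Markov inequality for $\hat\varepsilon$, and the factorization $\hat\varepsilon(\exp\{-\frac12\sum_j g_{1/2}(\cdot)\})\le\prod_j\hat\varepsilon(\exp\{-\frac12 g_{1/2}(\cdot)\})$ must be justified from the test-function definition of independence (it is not a consequence of mere "independence of events," a notion that is not even defined here); (iii) the elementary inequality $e^{-x}\le 1-\tfrac12 x\le e^{-x/2}$ converts the product into $\exp\{-\tfrac14\sum_j\hat E(g_{1/2}(\cdot))\}\to 0$, whence $V(\limsup_n|X_n|/n^{1/p}>M/2)=1$ for every $M$, contradicting $S_n/n^{1/p}\to0$ a.s.\ $V$. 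Also note the paper argues by contradiction from $C_V=\infty$ and concludes $V(\limsup>M/2)=1$ directly; your contrapositive formulation "$V(A_n\ \text{i.o.})=0$ forces $\sum_nV(|X_1|^p>\varepsilon n)<\infty$" is fine in spirit, but the statement you would need to quote is exactly the one outlined above, and without supplying its proof the necessity direction is incomplete.
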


\begin{proof}
\(\Rightarrow \) Define \({{Y}_{k}}=\left( -{{k}^{{1}/{p}\;}} \right)\vee \left( {{X}_{k}}\wedge {{k}^{{1}/{p}\;}} \right).\)\\
So,

\(\sum\limits_{k=1}^{\infty }{V\left( {{Y}_{k}}\ne {{X}_{k}} \right)}=\sum\limits_{k=1}^{\infty }{V\left( \left| {{X}_{1}} \right|>{{k}^{1/p}} \right)}\le \int_{0}^{\infty }{V\left( \left| {{X}_{1}} \right|>{{t}^{1/p}} \right)dt={{C}_{V}}\left( {{\left| {{X}_{1}} \right|}^{p}} \right)<\infty }\).\\
By the Borel-Cantelli lemma, we can get \(V\left( {{X}_{k}}\ne {{Y}_{k}},i.o. \right)=0\).
That is
\\
\centerline{\(v\left( \underset{k\to \infty }{\mathop{\lim \inf }}\,\left( {{X}_{k}}={{Y}_{k}} \right) \right)=1.\)}
\(\forall \omega \in \underset{k\to \infty }{\mathop{\lim \inf }}\,\left( {{X}_{k}}={{Y}_{k}} \right)\), \(\exists K\), s.t.\\
\centerline{\({{X}_{k}}\left( \omega  \right)={{Y}_{k}}\left( \omega  \right)\), when \(k>K\).}
 so \(\sum\limits_{k\ge K}{{{X}_{k}}}=\sum\limits_{k\ge K}{{{Y}_{k}}}\).

In order to prove \({{{S}_{n}}}/{{{n}^{1/p}}}\;\to 0\ a.s.\ V\), we only need to prove \({{{{\bar{S}}}_{n}}}/{{{n}^{1/p}}}\;\to 0\ a.s.\ V\), where
\({{\bar{S}}_{n}}=\sum\limits_{k=1}^{n}{{{Y}_{k}}}\).
Let \({{n}_{k}}={{[\theta]}^{k}}\), \(\theta >1\).

Step one: we first prove
\(\frac{{{{\bar{S}}}_{{{n}_{k}}}}}{n_{k}^{{1}/{p}\;}}\to 0\ \ \ a.s.\ V.\)\\
\centerline{\(V\left( \frac{{{{\bar{S}}}_{{{n}_{k}}}}}{n_{k}^{{1}/{p}\;}}\ge \varepsilon  \right)=V\left( \sum\limits_{j=1}^{{{n}_{k}}}{\left( {{Y}_{j}}-\hat{E}{{Y}_{j}} \right)}\ge \varepsilon \cdot n_{k}^{{1}/{p}\;}-\sum\limits_{j=1}^{{{n}_{k}}}{\hat{E}{{Y}_{j}}} \right).\)}
Now we prove \(\frac{1}{2}\varepsilon \cdot n_{k}^{{1}/{p}\;}-\sum\limits_{j=1}^{{{n}_{k}}}{\hat{E}{{Y}_{j}}\ge 0}\), i.e.
\(\frac{2\sum\limits_{j=1}^{{{n}_{k}}}{\hat{E}{{Y}_{j}}}}{{{\left( {{n}_{k}} \right)}^{{1}/{p}\;}}}\le \varepsilon .\)\\
Since \(\hat{E}\left( {{X}_{j}} \right)=0\), we have
\begin{align*}
  & \left| \hat{E}{{Y}_{j}} \right|=\left| \hat{E}\left( {{X}_{j}}I\left( \left| {{X}_{j}} \right|\le {{j}^{{1}/{p}\;}} \right)+{{j}^{{1}/{p}\;}}I\left( \left| {{X}_{j}} \right|>{{j}^{{1}/{p}\;}} \right) \right) \right| \\
 & \ \ \ \ \ \ \ \ \le \left| \hat{E}\left( {{X}_{j}}I\left( \left| {{X}_{j}} \right|\le {{j}^{{1}/{p}\;}} \right) \right) \right|+\left| \hat{E}\left( {{j}^{{1}/{p}\;}}I\left( \left| {{X}_{j}} \right|>{{j}^{{1}/{p}\;}} \right) \right) \right| \\
 & \ \ \ \ \ \ \ \ \le \left| \hat{E}\left( {{X}_{j}}\left( 1-I\left( \left| {{X}_{j}} \right|>{{j}^{{1}/{p}\;}} \right) \right) \right) \right|+\left| \hat{E}\left( \left| {{X}_{j}} \right|I\left( \left| {{X}_{j}} \right|>{{j}^{{1}/{p}\;}} \right) \right) \right| \\
 & \ \ \ \ \ \ \ \ \le 2\cdot \left| \hat{E}\left( \left| {{X}_{j}} \right|I\left( \left| {{X}_{j}} \right|>{{j}^{{1}/{p}\;}} \right) \right) \right|=2{{j}^{{1}/{p}\;}}\cdot \left| \hat{E}\left( \frac{\left| {{X}_{j}} \right|}{{{j}^{{1}/{p}\;}}}I\left( \left| {{X}_{j}} \right|>{{j}^{{1}/{p}\;}} \right) \right) \right| \\
 & \ \ \ \ \ \ \ \ \le 2{{j}^{\left( {1}/{p-1}\; \right)}}\cdot \left| \hat{E}\left( {{\left| {{X}_{1}} \right|}^{p}}I\left( {{\left| {{X}_{1}} \right|}^{p}}>j \right) \right) \right|
\end{align*}
Hence,
\begin{center}
\(\sum\limits_{j=1}^{{{n}_{k}}}{\left| \hat{E}{{Y}_{j}} \right|}\le \sum\limits_{j=1}^{{{n}_{k}}}{{2{j}^{\left( {1}/{p}\;-1 \right)}}\hat{E}\left( {{\left| {{X}_{1}} \right|}^{p}}I\left( {{\left| {{X}_{1}} \right|}^{p}}>j \right) \right)}\).
\end{center}
Since \(\underset{c\to \infty }{\mathop{\lim }}\,\hat{E}\left[ {{\left( {{\left| {{X}_{1}} \right|}^{p}}-c \right)}^{+}} \right]=0\), then\\
 \centerline{\(\hat{E}\left( {{\left| {{X}_{1}} \right|}^{p}}I\left( {{\left| {{X}_{1}} \right|}^{p}}>j \right) \right)\to 0\), when \(j\to \infty \) .}
Therefore,
\begin{center}
\(\frac{\sum\limits_{j=1}^{{{n}_{k}}}{{{j}^{\left( {1}/{p}\;-1 \right)}}\hat{E}\left( {{\left| {{X}_{1}} \right|}^{p}}I\left( {{\left| {{X}_{1}} \right|}^{p}}>j \right) \right)}}{{{\left( {{n}_{k}} \right)}^{{1}/{p}\;}}}\to 0\), when \(k\to \infty \).
\end{center}
We can get \(\frac{1}{2}\varepsilon \cdot n_{k}^{{1}/{p}\;}-\sum\limits_{j=1}^{{{n}_{k}}}{\left| \hat{E}{{Y}_{j}} \right|\ge 0}\). Furthermore,
\(\frac{1}{2}\varepsilon \cdot n_{k}^{{1}/{p}\;}-\sum\limits_{j=1}^{{{n}_{k}}}{\hat{E}{{Y}_{j}}\ge 0}.\)\\
It's obvious that \(\hat{E}\left( {{Y}_{j}}-\hat{E}{{Y}_{j}} \right)\le 0\).\\
By theorem 5,
\begin{align*}
  & V\left( \frac{{{{\bar{S}}}_{{{n}_{k}}}}}{n_{k}^{{1}/{p}\;}}\ge \varepsilon  \right)=V\left( \sum\limits_{j=1}^{{{n}_{k}}}{\left( {{Y}_{j}}-\hat{E}{{Y}_{j}} \right)}\ge \varepsilon \cdot n_{k}^{{1}/{p}\;}-\sum\limits_{j=1}^{{{n}_{k}}}{\hat{E}{{Y}_{j}}} \right) \\
 & \ \ \ \ \ \ \ \ \ \ \ \ \ \ \ \ \ \ \ \le V\left( \sum\limits_{j=1}^{{{n}_{k}}}{\left( {{Y}_{j}}-\hat{E}{{Y}_{j}} \right)}\ge \frac{1}{2}\varepsilon \cdot n_{k}^{{1}/{p}\;} \right)\le V\left( \sum\limits_{j=1}^{{{n}_{k}}}{{\left( {{Y}_{j}}-\hat{E}{{Y}_{j}} \right)}}^{+}\ge \frac{1}{2}\varepsilon \cdot n_{k}^{{1}/{p}\;} \right)\\
 & \ \ \ \ \ \ \ \ \ \ \ \ \ \ \ \ \ \ \ \le \frac{\hat{E}\left( {{\left( \sum\limits_{j=1}^{{{n}_{k}}}{{\left( {{Y}_{j}}-\hat{E}{{Y}_{j}} \right)}}^{+} \right)}^{2}} \right)}{{{\left( \frac{1}{2}\varepsilon \cdot n_{k}^{{1}/{p}\;} \right)}^{2}}}\le \frac{\hat{E}\left( {{\left| \underset{1\le j\le {{n}_{k}}}{\mathop{\max }}\,\sum\limits_{i=1}^{j}{{\left( {{Y}_{i}}-\hat{E}{{Y}_{i}} \right)}}^{+} \right|}^{2}} \right)}{{{\left( \frac{1}{2}\varepsilon \cdot n_{k}^{{1}/{p}\;} \right)}^{2}}} \\
 & \ \ \ \ \ \ \ \ \ \ \ \ \ \ \ \ \ \ \ \le \frac{\sum\limits_{j=1}^{{{n}_{k}}}{\hat{E}\left( {{\left| {{Y}_{j}}-\hat{E}{{Y}_{j}} \right|}^{2}} \right)}}{{{\left( \frac{1}{2}\varepsilon \cdot n_{k}^{{1}/{p}\;} \right)}^{2}}}\le \frac{4\sum\limits_{j=1}^{{{n}_{k}}}{\hat{E}\left( {{\left| {{Y}_{j}} \right|}^{2}} \right)}}{{{\left( \frac{1}{2}\varepsilon \cdot n_{k}^{{1}/{p}\;} \right)}^{2}}}.
\end{align*}
Since
\begin{align*}
  & \sum\limits_{k=1}^{\infty }{\frac{1}{n_{k}^{{2}/{p}\;}}\cdot \sum\limits_{j=1}^{{{n}_{k}}}{\hat{E}\left( {{\left| {{Y}_{j}} \right|}^{2}} \right)}}=\sum\limits_{j=1}^{\infty }{\hat{E}\left( {{\left| {{Y}_{j}} \right|}^{2}} \right)\cdot \sum\limits_{k:{{n}_{k}}\ge j}{\frac{1}{n_{k}^{{2}/{p}\;}}}} \\
 & \ \ \ \ \ \ \ \ \ \ \ \ \ \ \ \ \ \ \ \ \ \ \ \ \ \ \ \ \ \ \le \sum\limits_{j=1}^{\infty }{\frac{\hat{E}\left( {{\left| {{Y}_{j}} \right|}^{2}} \right)}{{{j}^{{2}/{p}\;}}}}\le \sum\limits_{j=1}^{\infty }{\frac{\hat{E}\left( {{\left| {{X}_{j}}\wedge {{j}^{{1}/{p}\;}} \right|}^{2}} \right)}{{{j}^{{2}/{p}\;}}}}<\infty,
\end{align*}
we have\\
\centerline{\(\sum\limits_{k=1}^{\infty }{V\left( \frac{{{{\bar{S}}}_{{{n}_{k}}}}}{n_{k}^{{1}/{p}\;}}\ge \varepsilon  \right)}<\infty \).}\\
By the Borel-Cantelli lemma, we can get
\(V\left( \frac{{{{\bar{S}}}_{{{n}_{k}}}}}{n_{k}^{{1}/{p}\;}}\ge \varepsilon ,i.o. \right)=0\).\\
Considering \(\left\{ -{{X}_{i}} \right\}\), by the same way, we can get\\
\centerline{\(V\left( \frac{-{{{\bar{S}}}_{{{n}_{k}}}}}{n_{k}^{{1}/{p}\;}}\ge \varepsilon ,i.o. \right)=0\).}
By noting that,\\
\centerline{\(V\left( \frac{\left| {{{\bar{S}}}_{{{n}_{k}}}} \right|}{n_{k}^{{1}/{p}\;}}\ge \varepsilon ,i.o. \right)\le V\left( \frac{{{{\bar{S}}}_{{{n}_{k}}}}}{n_{k}^{{1}/{p}\;}}\ge \varepsilon ,i.o. \right)+V\left( \frac{-{{{\bar{S}}}_{{{n}_{k}}}}}{n_{k}^{{1}/{p}\;}}\ge \varepsilon ,i.o. \right)=0\).}
Hence,\\
\centerline{\(\frac{{{{\bar{S}}}_{{{n}_{k}}}}}{n_{k}^{{1}/{p}\;}}\to 0\ \ \ a.s.\ V.\)}
\newpage
Step two: Next, we will prove \(\frac{\underset{{{n}_{k-1}}< n\le {{n}_{k}}}{\mathop{\max }}\,\left| {{{\bar{S}}}_{n}}-{{{\bar{S}}}_{{{n}_{k}}}} \right|}{n_{k}^{{1}/{p}\;}}\to 0\ a.s.V.\)\\
We only prove \(\frac{\underset{{{n}_{k-1}}< n\le {{n}_{k}}}{\mathop{\max }}\,{{\left( {{{\bar{S}}}_{{{n}_{k}}}}-{{{\bar{S}}}_{n}} \right)}^{+}}}{n_{k}^{{1}/{p}\;}}\to 0\ a.s.V\),
because the proof of \(\frac{\underset{{{n}_{k-1}}< n\le {{n}_{k}}}{\mathop{\max }}\,{{\left( {{{\bar{S}}}_{{{n}_{k}}}}-{{{\bar{S}}}_{n}} \right)}^{-}}}{n_{k}^{{1}/{p}\;}}\to 0\ a.s.V\) is similar.
\begin{align*}
  & V\left( \underset{{{n}_{k-1}}< n\le {{n}_{k}}}{\mathop{\max }}\,{{\left( {{{\bar{S}}}_{{{n}_{k}}}}-{{{\bar{S}}}_{n}} \right)}^{+}}\ge \varepsilon \cdot n_{k}^{{1}/{p}\;} \right) \\
 & \le V\left( \underset{{{n}_{k-1}}< n\le {{n}_{k}}}{\mathop{\max }}\,{{\left( {{{\bar{S}}}_{{{n}_{k}}}}-{{{\bar{S}}}_{n}}-\sum\limits_{j={n+1}}^{{n}_{k}}{\hat{E}{{Y}_{j}}} \right)}^{+}}\ge \varepsilon \cdot n_{k}^{{1}/{p}\;}-\sum\limits_{j={{n}_{k-1}+1}}^{{{n}_{k}}}{\left| \hat{E}{{Y}_{j}} \right|} \right).(*)
\end{align*}
Now, we prove \(\frac{1}{2}\varepsilon \cdot n_{k}^{{1}/{p}\;}-\sum\limits_{j={{n}_{k-1}+1}}^{{{n}_{k}}}{\left| \hat{E}{{Y}_{j}} \right|\ge 0}\), i.e.
\(\frac{2\sum\limits_{j={{n}_{k-1}+1}}^{{{n}_{k}}}{\left| \hat{E}{{Y}_{j}} \right|}}{{{\left( {{n}_{k}} \right)}^{{1}/{p}\;}}}\le \varepsilon .\)\\
Since \(\hat{E}\left( {{X}_{j}} \right)=0\), we can get
\begin{align*}
  & \left| \hat{E}{{Y}_{j}} \right|=\left| \hat{E}\left( {{X}_{j}}I\left( \left| {{X}_{j}} \right|\le {{j}^{{1}/{p}\;}} \right)+{{j}^{{1}/{p}\;}}I\left( \left| {{X}_{j}} \right|>{{j}^{{1}/{p}\;}} \right) \right) \right| \\
 & \ \ \ \ \ \ \ \ \le \left| \hat{E}\left( {{X}_{j}}I\left( \left| {{X}_{j}} \right|\le {{j}^{{1}/{p}\;}} \right) \right) \right|+\left| \hat{E}\left( {{j}^{{1}/{p}\;}}I\left( \left| {{X}_{j}} \right|>{{j}^{{1}/{p}\;}} \right) \right) \right| \\
 & \ \ \ \ \ \ \ \ \le \left| \hat{E}\left( {{X}_{j}}\left( 1-I\left( \left| {{X}_{j}} \right|>{{j}^{{1}/{p}\;}} \right) \right) \right) \right|+\left| \hat{E}\left( \left| {{X}_{j}} \right|I\left( \left| {{X}_{j}} \right|>{{j}^{{1}/{p}\;}} \right) \right) \right| \\
 & \ \ \ \ \ \ \ \ \le 2\cdot \left| \hat{E}\left( \left| {{X}_{j}} \right|I\left( \left| {{X}_{j}} \right|>{{j}^{{1}/{p}\;}} \right) \right) \right|=2{{j}^{{1}/{p}\;}}\cdot \left| \hat{E}\left( \frac{\left| {{X}_{j}} \right|}{{{j}^{{1}/{p}\;}}}I\left( \left| {{X}_{j}} \right|>{{j}^{{1}/{p}\;}} \right) \right) \right| \\
 & \ \ \ \ \ \ \ \ \le 2{{j}^{\left( {1}/{p-1}\; \right)}}\cdot \left| \hat{E}\left( {{\left| {{X}_{1}} \right|}^{p}}I\left( {{\left| {{X}_{1}} \right|}^{p}}>j \right) \right) \right|
\end{align*}
So,
\begin{center}
\(\sum\limits_{j={{n}_{k-1}+1}}^{{{n}_{k}}}{\left| \hat{E}{{Y}_{j}} \right|}\le \sum\limits_{j={{n}_{k-1}+1}}^{{{n}_{k}}}{{2{j}^{\left( {1}/{p}\;-1 \right)}}\hat{E}\left( {{\left| {{X}_{1}} \right|}^{p}}I\left( {{\left| {{X}_{1}} \right|}^{p}}>j \right) \right)}\).
\end{center}
Since \(\underset{c\to \infty }{\mathop{\lim }}\,\hat{E}\left[ {{\left( {{\left| {{X}_{1}} \right|}^{p}}-c \right)}^{+}} \right]=0\), then\\
 \centerline{\(\hat{E}\left( {{\left| {{X}_{1}} \right|}^{p}}I\left( {{\left| {{X}_{1}} \right|}^{p}}>j \right) \right)\to 0\), when \(j\to \infty \) .}
Therefore,
\begin{center}
\(\frac{\sum\limits_{j={{n}_{k-1}+1}}^{{{n}_{k}}}{{{j}^{\left( {1}/{p}\;-1 \right)}}\hat{E}\left( {{\left| {{X}_{1}} \right|}^{p}}I\left( {{\left| {{X}_{1}} \right|}^{p}}>j \right) \right)}}{{{\left( {{n}_{k}} \right)}^{{1}/{p}\;}}}\to 0\), when \(k\to \infty \).
\end{center}
We can get \(\frac{1}{2}\varepsilon \cdot n_{k}^{{1}/{p}\;}-\sum\limits_{j={{n}_{k-1}+1}}^{{{n}_{k}}}{\left| \hat{E}{{Y}_{j}} \right|\ge 0}\).\\
since \(\hat{E}\left( {{Y}_{i}}-\hat{E}\left( {{Y}_{i}} \right) \right)\le 0\),
(*) can be transformed into
\begin{align*}
  & V\left( \underset{{{n}_{k-1}}<n\le {{n}_{k}}}{\mathop{\max }}\,{{\left( {{{\bar{S}}}_{{{n}_{k}}}}-{{{\bar{S}}}_{n}} \right)}^{+}}\ge \varepsilon \cdot n_{k}^{{1}/{p}\;} \right)\\
  & \le V\left( \underset{{{n}_{k-1}}<n\le {{n}_{k}}}{\mathop{\max }}\,{{\left( {{{\bar{S}}}_{{{n}_{k}}}}-{{{\bar{S}}}_{n}}-\sum\limits_{j=n+1}^{{{n}_{k}}}{\hat{E}{{Y}_{j}}} \right)}^{+}}\ge \frac{1}{2}\varepsilon  n_{k}^{{1}/{p}\;} \right) \\
 &  =V\left( \underset{{{n}_{k-1}}<n\le {{n}_{k}}}{\mathop{\max }}\,{{\left( \sum\limits_{j=n+1}^{{{n}_{k}}}{\left( {{Y}_{j}}-\hat{E}{{Y}_{j}} \right)} \right)}^{+}}\ge \frac{1}{2}\varepsilon  n_{k}^{{1}/{p}\;} \right) \\
 &  \le \frac{\hat{E}\left( {{\left( \underset{{{n}_{k-1}}<n\le {{n}_{k}}}{\mathop{\max }}\,{{\left( \sum\limits_{j=n+1}^{{{n}_{k}}}{\left( {{Y}_{j}}-\hat{E}{{Y}_{j}} \right)} \right)}^{+}} \right)}^{2}} \right)}{{{\left( \frac{1}{2}\varepsilon \cdot n_{k}^{{1}/{p}\;} \right)}^{2}}} \\
 &  \le \frac{\sum\limits_{j={{n}_{k-1}}+1}^{{{n}_{k}}}{\hat{E}\left[ {{\left| {{Y}_{j}}-\hat{E}{{Y}_{j}} \right|}^{2}} \right]}}{{{\left( \frac{1}{2}\varepsilon \cdot n_{k}^{{1}/{p}\;} \right)}^{2}}}\le \frac{4\sum\limits_{j={{n}_{k-1}}+1}^{{{n}_{k}}}{\hat{E}\left[ {{\left| {{Y}_{j}} \right|}^{2}} \right]}}{{{\left( \frac{1}{2}\varepsilon \cdot n_{k}^{{1}/{p}\;} \right)}^{2}}}.
\end{align*}
So,\\
\begin{align*}
  & \sum\limits_{k=1}^{\infty }{\frac{\sum\limits_{j={{n}_{k-1}}\text{+}1}^{{{n}_{k}}}{\hat{E}\left[ {{\left| {{Y}_{j}} \right|}^{2}} \right]}}{n_{k}^{{2}/{p}\;}}}\le \sum\limits_{k=1}^{\infty }{\sum\limits_{j={{n}_{k-1}}\text{+}1}^{{{n}_{k}}}{\frac{\hat{E}\left[ {{\left| {{Y}_{j}} \right|}^{2}} \right]}{{{j}^{{2}/{p}\;}}}}}=\sum\limits_{j=2}^{\infty }{\frac{\hat{E}\left[ {{\left| {{Y}_{j}} \right|}^{2}} \right]}{{{j}^{{2}/{p}\;}}}} \\
 & \ \ \ \ \ \ \ \ \ \ \ \ \ \ \ \ \ \ \ \ \ \ \ \ \ \ \ \ <\sum\limits_{j=1}^{\infty }{\frac{\hat{E}\left[ {{\left| {{X}_{j}}\wedge {{j}^{{1}/{p}\;}} \right|}^{2}} \right]}{{{j}^{{2}/{p}\;}}}}<\infty .
\end{align*}

By Borel-Cantelli lemma, we can get\\
\centerline{\(V\left( \frac{\underset{{{n}_{k-1}}< n\le {{n}_{k}}}{\mathop{\max }}\,{{\left( {{{\bar{S}}}_{{{n}_{k}}}}-{{{\bar{S}}}_{n}} \right)}^{+}}}{n_{k}^{{1}/{p}\;}}\ge \varepsilon ,i.o. \right)=0\).}\\
Hence,
\begin{center}
\(\frac{\underset{{{n}_{k-1}}< n\le {{n}_{k}}}{\mathop{\max }}\,{{\left( {{{\bar{S}}}_{{{n}_{k}}}}-{{{\bar{S}}}_{n}} \right)}^{+}}}{n_{k}^{{1}/{p}\;}}\to 0\ a.s.V.\)
\end{center}
Noting that \(\left| {{{\bar{S}}}_{{{n}_{k}}}}-{{{\bar{S}}}_{n}} \right|={{\left( {{{\bar{S}}}_{{{n}_{k}}}}-{{{\bar{S}}}_{n}} \right)}^{+}}+{{\left( {{{\bar{S}}}_{{{n}_{k}}}}-{{{\bar{S}}}_{n}} \right)}^{-}}\), finally we can get,\\
\centerline{\(\frac{\underset{{{n}_{k-1}}< n\le {{n}_{k}}}{\mathop{\max }}\,\left| {{{\bar{S}}}_{n}}-{{{\bar{S}}}_{{{n}_{k}}}} \right|}{n_{k}^{{1}/{p}\;}}\to 0\ a.s.V.\)}

\(\Leftarrow \)
Suppose \({{C}_{V}}\left( {{\left| {{X}_{1}} \right|}^{p}} \right)=\infty \).\\
Let \({{g}_{\varepsilon }}\) be a function satisfying that its derivatives of each order are bounded, \({{g}_{\varepsilon }}\left( x \right)=1\) if \(x>1\), \({{g}_{\varepsilon }}\left( x \right)=0\) if \(x\le 1-\varepsilon \), and \(0\le {{g}_{\varepsilon }}\left( x \right)\le 1\) for all \(x\), where \(0<\varepsilon <1\). Then\\
 \centerline{\({{g}_{\varepsilon }}\left( \cdot  \right)\in {{C}_{l,Lip}}\left( R \right)\) and \(I\left\{ x\ge 1 \right\}\le {{g}_{\varepsilon }}\left( x \right)\le I\left\{ x>1-\varepsilon  \right\}\).}\\
 So,
\begin{align*}
  & \sum\limits_{j=1}^{\infty }{\hat{E}\left( {{g}_{{1}/{2}\;}}\left( \frac{\left| {{X}_{j}} \right|}{{{\left( Mj \right)}^{{1}/{p}\;}}} \right) \right)}=\sum\limits_{j=1}^{\infty }{\hat{E}\left( {{g}_{{1}/{2}\;}}\left( \frac{\left| {{X}_{1}} \right|}{{{\left( Mj \right)}^{{1}/{p}\;}}} \right) \right)} \\
 & \ \ \ \ \ \ \ \ \ \ \ \ \ \ \ \ \ \ \ \ \ \ \ \ \ \ \ \ \ \ \ \ \ \ \ \ge \sum\limits_{j=1}^{\infty }{V\left( \left| {{X}_{1}} \right|>{{\left( Mj \right)}^{{1}/{p}\;}} \right)}=\infty ,\ \ \forall M>0.(**)
\end{align*}
For any \(l\ge 1\),
\begin{align*}
  & v\left( \sum\limits_{j=1}^{n}{{{g}_{{1}/{2}\;}}\left( \frac{\left| {{X}_{1}} \right|}{{{\left( Mj \right)}^{{1}/{p}\;}}} \right)<l} \right)=v\left( \exp \left\{ -\frac{1}{2}\sum\limits_{j=1}^{n}{{{g}_{{1}/{2}\;}}\left( \frac{\left| {{X}_{1}} \right|}{{{\left( Mj \right)}^{{1}/{p}\;}}} \right)} \right\}>\exp \left( -\frac{l}{2} \right) \right) \\
 & \ \ \ \ \ \ \ \ \ \ \ \ \ \ \ \ \ \ \ \ \ \ \ \ \ \ \ \ \ \ \ \ \ \ \ \ \ \ \ \le {{e}^{{l}/{2}\;}}\hat{\varepsilon }\left( \exp \left\{ -\frac{1}{2}\sum\limits_{j=1}^{n}{{{g}_{{1}/{2}\;}}\left( \frac{\left| {{X}_{1}} \right|}{{{\left( Mj \right)}^{{1}/{p}\;}}} \right)} \right\} \right) \\
 & \ \ \ \ \ \ \ \ \ \ \ \ \ \ \ \ \ \ \ \ \ \ \ \ \ \ \ \ \ \ \ \ \ \ \ \  \ \ \ \le {{e}^{{l}/{2}\;}}\prod\limits_{j=1}^{n}{\hat{\varepsilon }\left( \exp \left\{ -\frac{1}{2}{{g}_{{1}/{2}\;}}\left( \frac{\left| {{X}_{1}} \right|}{{{\left( Mj \right)}^{{1}/{p}\;}}} \right) \right\} \right)}
\end{align*}
by the independence and \(0\le \exp \left\{ -\frac{1}{2}\sum\limits_{j=1}^{n}{{{g}_{{1}/{2}\;}}\left( \frac{\left| {{X}_{1}} \right|}{{{\left( Mj \right)}^{{1}/{p}\;}}} \right)} \right\}\in {{C}_{l,Lip}}\left( R \right)\).\\
Applying the elementary inequality\\
\centerline{\({{e}^{-x}}\le 1-\frac{1}{2}x\le {{e}^{-{x}/{2}\;}},\ \forall 0\le x\le \frac{1}{2}\)}
yields\\
\begin{center}
\(\exp \left\{ -\frac{1}{2}{{g}_{{1}/{2}\;}}\left( \frac{\left| {{X}_{1}} \right|}{{{\left( Mj \right)}^{{1}/{p}\;}}} \right) \right\}\le 1-\frac{1}{4}{{g}_{{1}/{2}\;}}\left( \frac{\left| {{X}_{1}} \right|}{{{\left( Mj \right)}^{{1}/{p}\;}}} \right)\le \exp \left\{ -\frac{1}{4}{{g}_{{1}/{2}\;}}\left( \frac{\left| {{X}_{1}} \right|}{{{\left( Mj \right)}^{{1}/{p}\;}}} \right) \right\}\).
\end{center}
It follows that\\
\begin{center}
 \(\hat{\varepsilon }\left( \exp \left\{ -\frac{1}{2}{{g}_{{1}/{2}\;}}\left( \frac{\left| {{X}_{1}} \right|}{{{\left( Mj \right)}^{{1}/{p}\;}}} \right) \right\} \right)\le \hat{\varepsilon }\left( \exp \left\{ -\frac{1}{4}{{g}_{{1}/{2}\;}}\left( \frac{\left| {{X}_{1}} \right|}{{{\left( Mj \right)}^{{1}/{p}\;}}} \right) \right\} \right)\).
\end{center}
since\\
\centerline{\(\hat{\varepsilon }\left( \exp \left\{ -\frac{1}{4}{{g}_{{1}/{2}\;}}\left( \frac{\left| {{X}_{1}} \right|}{{{\left( Mj \right)}^{{1}/{p}\;}}} \right) \right\} \right)\le \exp \left\{ \hat{\varepsilon }\left( -\frac{1}{4}{{g}_{{1}/{2}\;}}\left( \frac{\left| {{X}_{1}} \right|}{{{\left( Mj \right)}^{{1}/{p}\;}}} \right) \right) \right\}\) and}
\centerline{\(\exp \left\{ \hat{\varepsilon }\left( -\frac{1}{4}{{g}_{{1}/{2}\;}}\left( \frac{\left| {{X}_{1}} \right|}{{{\left( Mj \right)}^{{1}/{p}\;}}} \right) \right) \right\}=\exp \left\{ -\frac{1}{4}\hat{E}\left( {{g}_{{1}/{2}\;}}\left( \frac{\left| {{X}_{1}} \right|}{{{\left( Mj \right)}^{{1}/{p}\;}}} \right) \right) \right\}\),}
we have\\
\centerline{\(v\left( \sum\limits_{j=1}^{n}{{{g}_{{1}/{2}\;}}\left( \frac{\left| {{X}_{1}} \right|}{{{\left( Mj \right)}^{{1}/{p}\;}}} \right)<l} \right)\le {{e}^{\frac{l}{2}}}\exp \left\{ -\frac{1}{4}\hat{E}\left( {{g}_{{1}/{2}\;}}\left( \frac{\left| {{X}_{1}} \right|}{{{\left( Mj \right)}^{{1}/{p}\;}}} \right) \right) \right\}\to 0,\) as \(n\to \infty \).}
By(**),
we have\\
\begin{center}
\(V\left( \sum\limits_{j=1}^{n}{{{g}_{{1}/{2}\;}}\left( \frac{\left| {{X}_{1}} \right|}{{{\left( Mj \right)}^{{1}/{p}\;}}} \right)>l} \right)\to 1\), as \(n\to \infty \).
\end{center}
By continuity of \(V\), we can get
\begin{align*}
  & V\left( \underset{n\to \infty }{\mathop{\lim \sup }}\,\frac{\left| {{X}_{n}} \right|}{{{n}^{{1}/{p}\;}}}>\frac{M}{2} \right)=V\left( \frac{\left| {{X}_{j}} \right|}{{{\left( Mj \right)}^{{1}/{p}\;}}},i.o \right) \\
 & \ \ \ \ \ \ \ \ \ \ \ \ \ \ \ \ \ \ \ \ \ \ \ \ \ \ \ \ \ \ \ \ge V\left( \sum\limits_{j=1}^{\infty }{{{g}_{{1}/{2}\;}}\left( \frac{\left| {{X}_{j}} \right|}{{{\left( Mj \right)}^{{1}/{p}\;}}} \right)=\infty } \right)\\
 & \ \ \ \ \ \ \ \ \ \ \ \ \ \ \ \ \ \ \ \ \ \ \ \ \ \ \ \ \ \ \ =\underset{l\to \infty }{\mathop{\lim }}\,V\left( \sum\limits_{j=1}^{\infty }{{{g}_{{1}/{2}\;}}\left( \frac{\left| {{X}_{j}} \right|}{{{\left( Mj \right)}^{{1}/{p}\;}}} \right)>\frac{l}{2}} \right) \\
 & \ \ \ \ \ \ \ \ \ \ \ \ \ \ \ \ \ \ \ \ \ \ \ \ \ \ \ \ \ \ \ =\underset{l\to \infty }{\mathop{\lim }}\,\underset{n\to \infty }{\mathop{\lim }}\,V\left( \sum\limits_{j=1}^{\infty }{{{g}_{{1}/{2}\;}}\left( \frac{\left| {{X}_{j}} \right|}{{{\left( Mj \right)}^{{1}/{p}\;}}} \right)>\frac{l}{2}} \right)=1.
\end{align*}
On the other hand,\\
\begin{center}
\(\underset{n\to \infty }{\mathop{\lim \sup }}\,\frac{\left| {{X}_{n}} \right|}{{{n}^{{1}/{p}\;}}}\le \underset{n\to \infty }{\mathop{\lim \sup }}\,\left( \frac{\left| {{S}_{n}} \right|}{{{n}^{{1}/{p}\;}}}+\frac{\left| {{S}_{n-1}} \right|}{{{n}^{{1}/{p}\;}}} \right)\le 2\underset{n\to \infty }{\mathop{\lim \sup }}\,\frac{\left| {{S}_{n}} \right|}{{{n}^{{1}/{p}\;}}}.\)
\end{center}
It follows that\\
 \centerline{\(V\left( \underset{n\to \infty }{\mathop{\lim \sup }}\,\frac{\left| {{X}_{n}} \right|}{{{n}^{{1}/{p}\;}}}>m \right)=1,\forall m>0.\)}
This contradicts
 \(V\left( \underset{n\to \infty }{\mathop{\lim }}\,\frac{\left| {{X}_{n}} \right|}{{{n}^{{1}/{p}\;}}}=0 \right)=1.\)\\
 Therefore, the assumption \({{C}_{V}}\left( {{\left| {{X}_{1}} \right|}^{p}} \right)=\infty \) is incorrect.\\
 Finally, we have\\
 \centerline{\({{C}_{V}}\left( {{\left| {{X}_{1}} \right|}^{p}} \right)<\infty \).}

\end{proof}

Next, we will give a theorem about convergence of a random series
\begin{thm}\cite{zhang16c} Suppose that \(\left\{ {{X}_{i}} \right\}\) is a sequence of independent random variable in a sub-linear expectation space. If
\(\hat{E}\left( {{X}_{i}} \right)=\hat{\varepsilon }\left( {{X}_{i}} \right)=0\), \({{S}_{k}}=\sum\limits_{i=1}^{k}{{{X}_{i}}}\),
then\\
\centerline{\(\hat{E}\left( \underset{k\le n}{\mathop{\max }}\,{{\left| {{S}_{k}} \right|}^{p}} \right)\le {{C}_{p}}\left\{ \sum\limits_{k=1}^{n}{\hat{E}\left( {{\left| {{X}_{k}} \right|}^{p}} \right)+{{\left( \sum\limits_{k=1}^{n}{\hat{E}\left( {{\left| {{X}_{k}} \right|}^{2}} \right)} \right)}^{{p}/{2}\;}}} \right\}\).}\\
where \({{C}_{p}}\) is a positive constant depending only on \(p\).

\end{thm}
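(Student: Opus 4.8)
The plan is to work directly with the two-sided maximal quantity \(\max_{k\le n}|S_k|\) and to establish the bound through a Hoffmann--Jorgensen-type argument, using the hypothesis \(\hat E(X_k)=\hat\varepsilon(X_k)=0\) (via Theorem 5 applied to both \(\{X_k\}\) and \(\{-X_k\}\)) to supply two standard ingredients. First, from Theorem 5 with \(p=2\) together with \(\max_{k\le n}|S_k|\le|S_n|+\max_{k\le n}|S_n-S_k|\) I obtain the \(L^2\) maximal bound \(\hat E(\max_{k\le n}|S_k|^2)\le CB_n\), where \(B_n:=\sum_k\hat E X_k^2\). Second, by sub-additivity \(\hat E(\max_{k\le n}|X_k|^p)\le\hat E(\sum_k|X_k|^p)\le\sum_k\hat E|X_k|^p\). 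For \(1\le p\le2\) the theorem is already immediate from these (the \(L^2\) term being superfluous), so the content is in the range \(p\ge2\).

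The engine I would use is a Hoffmann--Jorgensen inequality adapted to the capacity \(\hat V\): for suitable \(s,t>0\), \(\hat V(\max_k|S_k| > 3t+s)\le C\,\hat V(\max_k|S_k|>t)^2 + \hat V(\max_k|X_k|>s)\) (or a variant with a small multiplicative constant in place of the square). I would derive it from a Levy--Ottaviani-type maximal inequality under \(\hat V\) and a first-passage decomposition at the first index where \(|S_k|\) crosses a level, both justified through the independence axiom \(\hat E(\varphi(\mathbf X,\mathbf Y))=\hat E(\hat E(\varphi(\mathbf x,\mathbf Y))|_{\mathbf x=\mathbf X})\). Taking \(t=t_0\), the smallest level with \(\hat V(\max_k|S_k|>t_0)\le\eta\) for a small universal \(\eta=\eta(p)\) -- which by the \(L^2\) bound above and Chebyshev satisfies \(t_0\le C\sqrt{B_n}\) -- the quadratic term is absorbed, and integrating the recursion in \(s\) yields \(\hat E((\max_k|S_k|)^p)\le C_p(\hat E(\max_k|X_k|^p)+t_0^p)\).

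Combining with the two ingredients gives \(\hat E((\max_k|S_k|)^p)\le C_p(\sum_k\hat E|X_k|^p + B_n^{p/2})\), which is the claim. The decisive advantage of this route is that the heavy-tail contribution enters only through \(\hat E(\max_k|X_k|^p)\le\sum_k\hat E|X_k|^p\), so the bound stays in terms of the sublinear expectation \(\hat E|X_k|^p\) rather than the strictly larger Choquet integral \(C_{\hat V}(|X_k|^p)\) that a direct layer-cake estimate of \(\int_0^\infty x^{p-1}\hat V(\max_k|S_k|>x)dx\) would produce. I expect the main obstacle to be establishing the Hoffmann--Jorgensen / Ottaviani maximal inequality under \(\hat V\): the classical proofs rest on additivity of \(P\) across the disjoint first-passage events and on the product rule for independent events, neither available for a sub-additive, non-multiplicative capacity, so the independence axiom must be invoked with care and the ``squared'' tail term replaced by a small-constant tail term. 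As a fallback, should this step resist, I would instead prove a Burkholder--Marcinkiewicz--Zygmund reduction to the quadratic variation \(\hat E((\sum_k X_k^2)^{p/2})\) and then run an induction on \(p\mapsto p/2\) that feeds Theorem 5 into the centred squares \(X_k^2-\hat E X_k^2\) (using \((\sum_k(X_k^2-\hat E X_k^2))_-\le B_n\) for the lower deviation) and absorbs the resulting cross term \((\sum_k\hat E X_k^4)^{p/4}\) by Lyapunov interpolation in \(\hat E\) and a weighted arithmetic--geometric mean inequality.
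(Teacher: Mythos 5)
The paper itself offers no proof of this statement: it is imported verbatim from \cite{zhang16c}, where it is obtained by a moment-level induction on $p$ that bootstraps from the case $1\le p\le 2$ (Theorem 5 here), entirely at the level of $\hat E$ and without passing through tail capacities. So your proposal has to stand on its own, and for $1\le p\le 2$ your reduction to Theorem 5 applied to $\{X_k\}$ and $\{-X_k\}$ is indeed fine; the problem is the range $p>2$, which is the whole content of the theorem.

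The structural gap is in the step ``integrating the recursion in $s$.'' A Hoffmann--Jorgensen argument lives entirely at the level of tails, and converting between moments and tails uses the layer-cake identity, which fails under a sub-linear expectation: $p\int_0^\infty s^{p-1}\hat V\bigl(\max_k|X_k|>s\bigr)\,ds$ is the Choquet integral $C_V\bigl(\max_k|X_k|^p\bigr)$, and $C_V\ge\hat E$ with a gap that can be infinite. The integration therefore yields at best
\[
\hat E\Bigl(\bigl(\max_{k\le n}|S_k|\bigr)^p\Bigr)\le C_V\Bigl(\bigl(\max_{k\le n}|S_k|\bigr)^p\Bigr)\le C_p\Bigl(C_V\bigl(\max_{k\le n}|X_k|^p\bigr)+t_0^p\Bigr)\le C_p\Bigl(\sum_{k=1}^n C_V\bigl(|X_k|^p\bigr)+B_n^{p/2}\Bigr),
\]
a Rosenthal inequality with Choquet integrals of $|X_k|^p$ on the right, which is strictly weaker than the stated bound in terms of $\hat E(|X_k|^p)$ and cannot be repaired, since the inequality $\hat E\le C_V$ points the wrong way on the input side. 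The ``decisive advantage'' you claim for this route is precisely what it cannot deliver. In addition, the Ottaviani/first-passage step needs a factorization $V(\{\tau=j\}\cap A)\approx V(\{\tau=j\})\,V(A)$; Peng's independence only provides an iterated-expectation identity for $C_{l,Lip}$ test functions, $\hat V$ is not multiplicative, and $\{\tau=j\}$ is the indicator of a non-smooth functional of $(X_1,\dots,X_j)$, so this step would require a smoothing construction you have not supplied. Your fallback --- reduction to the quadratic variation and an induction $p\mapsto p/2$ feeding Theorem 5 into the centred squares $X_k^2-\hat E X_k^2$ --- is much closer to a workable argument in this framework and to the proof in \cite{zhang16c}, because it never leaves the level of $\hat E$; but as sketched it presupposes a Marcinkiewicz--Zygmund/Burkholder inequality under $\hat E$, which in this setting is itself normally derived from Rosenthal-type arguments, so you would need to check that you are not arguing in a circle.
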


\begin{deft}\cite{sun15a}
A sub-linear expectation \(\hat{E}\) is called regular, if for any random variable sequence \(\left\{ {{X}_{n}} \right\}\) such that \({{X}_{n}}\downarrow 0\), we have\\
\centerline{\(\hat{E}\left( {{X}_{n}} \right)\downarrow 0\).}
\end{deft}
\begin{thm}\cite{sun15a}
If the sub-linear expecation \(\hat{E}\) is regular, and \(\left\{ {{\xi }_{n}} \right\}\) is a Cauchy sequence in capacity, there exists a subsequence \(\left\{ {{\xi }_{{{n}_{k}}}} \right\}\) converges to some \(\xi \) almost surely in capacity .
\end{thm}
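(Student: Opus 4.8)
The plan is to adapt the classical fact that a sequence which is Cauchy in probability admits an almost surely convergent subsequence, reading ``probability'' as the upper capacity $V$ and drawing the required countable subadditivity of $V$ from the regularity of $\hat E$.

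First I would exploit the Cauchy-in-capacity hypothesis to extract a fast subsequence. For each $k$ there is an index $N_k$ with $V\left(|\xi_m-\xi_n|\ge 2^{-k}\right)\le 2^{-k}$ whenever $m,n\ge N_k$; setting $n_k=\max\{N_k,\,n_{k-1}+1\}$ makes $\{n_k\}$ strictly increasing and gives, for $A_k:=\{|\xi_{n_{k+1}}-\xi_{n_k}|\ge 2^{-k}\}$, the bound $V(A_k)\le 2^{-k}$, so that $\sum_k V(A_k)<\infty$.

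Next I would check that regularity of $\hat E$ upgrades $V=\hat V$ to a continuous capacity, which is exactly what Lemma 1 needs. If $A_n\uparrow A$ then $I_{A\setminus A_n}\downarrow 0$, so regularity yields $\hat E(I_{A\setminus A_n})\downarrow 0$; together with monotonicity and sub-additivity of $\hat E$ this forces $V(A_n)\uparrow V(A)$, and the decreasing case is analogous. Hence $V$ is continuous, and Lemma 1 applies to the family $\{A_k\}$ above to give $V\left(\bigcap_{n}\bigcup_{k\ge n}A_k\right)=V(\limsup_k A_k)=0$.

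Finally I would identify the limit on the good set. For every $\omega\notin\limsup_k A_k$ one has $|\xi_{n_{k+1}}(\omega)-\xi_{n_k}(\omega)|<2^{-k}$ for all large $k$, so $\sum_k|\xi_{n_{k+1}}-\xi_{n_k}|$ converges at $\omega$; the representation $\xi_{n_K}=\xi_{n_1}+\sum_{k=1}^{K-1}(\xi_{n_{k+1}}-\xi_{n_k})$ then shows $\{\xi_{n_k}(\omega)\}$ converges in $R$. Defining $\xi$ to be this pointwise limit (and, say, $0$ elsewhere), the set where $\xi_{n_k}\not\to\xi$ is contained in $\limsup_k A_k$, so by monotonicity $V(\xi_{n_k}\not\to\xi)\le V(\limsup_k A_k)=0$, i.e. $\xi_{n_k}\to\xi$ almost surely in capacity. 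The one delicate point, and the step I expect to be the main obstacle, is the passage from regularity of the expectation $\hat E$ to continuity of the capacity $V$: it requires that the indicators $I_{A_n}$ (or suitable $C_{l,Lip}$ approximations, as used elsewhere in the paper) be handled inside the expectation so that $\hat E(I_{A_n})=V(A_n)$ and the monotone-convergence hypothesis of regularity can genuinely be invoked; once countable subadditivity of $V$ is in hand, the remaining Borel--Cantelli and absolute-convergence steps are routine.
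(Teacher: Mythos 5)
The paper does not actually prove this statement; it is quoted from Sun (2015) and used as a black box in the proof of Theorem 9, so there is no in-paper argument to compare yours against. Your proposal is the standard ``Cauchy in measure has an a.s.\ convergent subsequence'' argument transcribed to the upper capacity $V$, and it is sound: the fast subsequence with $\sum_k V(A_k)<\infty$, the appeal to Lemma 1, and the pointwise construction of $\xi$ on the complement of $\limsup_k A_k$ all go through once $V$ is known to be continuous (hence countably subadditive). You have also correctly isolated the one non-routine step --- converting regularity of $\hat E$ into continuity of $V$, which requires the indicators $I_{A\setminus A_n}$, or sandwiching $C_{l,Lip}$ approximations such as the $g_{\varepsilon}$ used in the proof of Theorem 6, to be admissible arguments of $\hat E$ --- and this is exactly the hypothesis the cited source leans on, so the argument is acceptable as written.
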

\begin{thm}Suppose that  \(\left\{ {{X}_{i}} \right\}\) is a sequence of independent random variable. If \(\hat{E}\) is regular, \(\hat{E}\left( {{X}_{i}} \right)=\hat{\varepsilon }\left( {{X}_{i}} \right)=0\), and \(\sum\limits_{i=1}^{\infty }{\hat{E}\left( X_{i}^{2} \right)}<\infty \), then
\(\sum\limits_{i=1}^{\infty }{{{X}_{i}}}\) converges almost surely in capacity.
\end{thm}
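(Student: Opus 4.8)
The plan is to show that the partial sums $S_n=\sum_{i=1}^{n}X_i$ form an almost surely Cauchy sequence, which then forces convergence. The engine is the sub-linear maximal inequality stated just above (the one with constant $C_p$), applied not to the whole sequence but to tail blocks, combined with a Chebyshev-type bound for the capacity and the Borel--Cantelli lemma (Lemma 1). I first record that since $\hat{E}$ is regular, the generated capacity $V=\hat{V}$ is continuous: if $A_n\downarrow A$ then $I_{A_n}-I_A\downarrow 0$, so regularity and sub-additivity give $V(A_n)\to V(A)$, and continuity from below then follows from sub-additivity. Thus both Lemma 1 and the monotone continuity of $V$ are available.

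Fix $m\ge 1$. The key observation is that the tail sequence $X_{m+1},X_{m+2},\dots$ is again independent (independence of $X_{i+1}$ to $(X_1,\dots,X_i)$ yields, by choosing test functions that ignore the first $m$ coordinates, independence of $X_{i+1}$ to $(X_{m+1},\dots,X_i)$) and still satisfies $\hat{E}(X_i)=\hat{\varepsilon}(X_i)=0$. Hence the maximal inequality applies to this tail with $p=2$, giving
\begin{align*}
\hat{E}\left(\max_{m<k\le n}\left|S_k-S_m\right|^2\right)\le 2C_2\sum_{k=m+1}^{n}\hat{E}\left(X_k^2\right).
\end{align*}
A Chebyshev inequality for the capacity---obtained from the sandwich $V(A)\le \hat{E}(g)$ with $g=|\cdot|^2/\varepsilon^2\ge I_A$ rather than from an identity---then bounds $V\left(\max_{m<k\le n}|S_k-S_m|\ge\varepsilon\right)$ by $\varepsilon^{-2}$ times the right-hand side. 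Letting $n\to\infty$ and using continuity of $V$ from below along the increasing events $\{\max_{m<k\le n}|S_k-S_m|\ge\varepsilon\}$, I obtain
\begin{align*}
V\left(\sup_{k>m}\left|S_k-S_m\right|\ge\varepsilon\right)\le\frac{2C_2}{\varepsilon^2}\sum_{k=m+1}^{\infty}\hat{E}\left(X_k^2\right).
\end{align*}

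Because $\sum_{k=1}^{\infty}\hat{E}(X_k^2)<\infty$, the tail sums on the right tend to $0$. I would then pick a strictly increasing sequence $m_j$ so that, taking $\varepsilon=2^{-j}$, the bound above is at most $2^{-j}$; the resulting capacities are summable, so Lemma 1 gives $V\left(\sup_{k>m_j}|S_k-S_{m_j}|\ge 2^{-j},\ \text{i.o.}\right)=0$. Equivalently, for $v$-almost every $\omega$ there is an index $J$ with $\sup_{k>m_j}|S_k(\omega)-S_{m_j}(\omega)|<2^{-j}$ for all $j\ge J$, which makes $\{S_n(\omega)\}$ a Cauchy sequence of reals and hence convergent. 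This is precisely the statement that $\sum_{i=1}^{\infty}X_i$ converges almost surely in capacity. Alternatively, the displayed bound shows $\{S_n\}$ is Cauchy in capacity, so the preceding theorem on Cauchy-in-capacity sequences supplies an almost surely convergent subsequence, after which the same block estimate fills the gaps.

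I expect the main obstacle to be the passage from convergence in capacity to almost sure convergence, i.e. reproducing Kolmogorov's maximal-inequality argument in the non-additive setting. Three points require care: verifying that the tail remains an independent, two-sidedly centered sequence so that the maximal inequality is applicable to each block; justifying the capacity Chebyshev step through the one-sided inequality $\hat{E}(f)\le V(A)\le \hat{E}(g)$ rather than an equality; and extracting genuine continuity of $V$---needed both to pass to the supremum over all $k>m$ and to invoke Lemma 1---from the regularity hypothesis on $\hat{E}$.
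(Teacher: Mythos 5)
Your proof is correct, and while it runs on the same two engines as the paper's---the $C_p$ maximal inequality with $p=2$ applied blockwise, a Chebyshev bound for the capacity, and the Borel--Cantelli lemma---it reaches the conclusion by a genuinely different route. The paper first shows $\{S_n\}$ is Cauchy in capacity and invokes the preceding theorem (regular $\hat{E}$ plus Cauchy in capacity yields an a.s.\ convergent subsequence $S_{n_k}\to S$), then controls $\max_{n_k<j\le n_{k+1}}\left|S_j-S_{n_k}\right|$ over the resulting blocks to pull the whole sequence along to $S$. You instead prove directly that $\{S_n(\omega)\}$ is a Cauchy sequence of reals for $v$-almost every $\omega$, by bounding $V\left(\sup_{k>m}\left|S_k-S_m\right|\ge\varepsilon\right)$ by $2C_2\varepsilon^{-2}\sum_{k>m}\hat{E}(X_k^2)$ and choosing $m_j$ with $\varepsilon_j=2^{-j}$ so that these capacities are summable. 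What your route buys: the subsequence-extraction theorem is never needed, and regularity enters only through your derivation of the continuity of $V$---a property the paper's proof also tacitly requires for its Borel--Cantelli step but never justifies, so your derivation actually plugs a small hole. What it costs: the extra limit passage $n\to\infty$ inside the capacity; note that the increasing union of the events $\{\max_{m<k\le n}|S_k-S_m|\ge\varepsilon\}$ is $\{\exists\, k>m:|S_k-S_m|\ge\varepsilon\}$ rather than literally the event written with a supremum, but that is exactly the event you need for the Cauchy conclusion, so nothing is lost. Your checks that the tail $(X_{m+1},X_{m+2},\dots)$ inherits independence and two-sided centering, and that the Chebyshev step follows from the one-sided comparison $V(A)\le\hat{E}(g)$ for $I_A\le g$, are precisely the right points of care.
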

\begin{proof}Define \({{S}_{n}}=\sum\limits_{i=1}^{n}{{{X}_{i}}}\). By the independence of random variable,
\(\forall \varepsilon >0\), if \(m > n\), then\\
\centerline{
\(V\left( \left| {{S}_{m}}-{{S}_{n}} \right|\ge \varepsilon  \right)\le \frac{1}{{{\varepsilon }^{2}}}\hat{E}\left( {{\left( {{S}_{m}}-{{S}_{n}} \right)}^{2}} \right)=\frac{1}{{{\varepsilon }^{2}}}\sum\limits_{k=n+1}^{m}{\hat{E}\left( X_{k}^{2} \right)}\to 0\), when \(m\to \infty \).}
So \(S_{n}\) is a Cauchy sequence in capacity.\\
Note that \(\hat{E}\) is regular,
so there exist a subsequence \(\left\{ {{n}_{k}} \right\}\) such that\\
\centerline{\({{S}_{{{n}_{k}}}}\to S\ a.s.\ V\).}
By Chebyshev inequality and theorem 8 with \(p=2\), we can get
\begin{align*}
  & \sum\limits_{k=1}^{\infty }{V\left( \underset{{{n}_{k}}<j\le {{n}_{k+1}}}{\mathop{\max }}\,\left| {{S}_{j}}-{{S}_{{{n}_{k}}}} \right|\ge \varepsilon  \right)}\le \sum\limits_{k=1}^{\infty }{\frac{\hat{E}\left( \underset{{{n}_{k}}<j\le {{n}_{k+1}}}{\mathop{\max }}\,{{\left| {{S}_{j}}-{{S}_{{{n}_{k}}}} \right|}^{2}} \right)}{{{\varepsilon }^{2}}}} \\
 & \ \ \ \ \ \ \ \ \ \ \ \ \ \ \ \ \ \ \ \ \ \ \ \ \ \ \ \ \ \ \ \ \ \ \ \ \ \ \ \ \ \ \ \ \le \frac{1}{{{\varepsilon }^{2}}}\sum\limits_{k=1}^{\infty }{2{{C}_{2}}}\sum\limits_{j={{n}_{k}+1}}^{{{n}_{k+1}}}{\hat{E}\left( {{\left| {{X}_{k}} \right|}^{2}} \right)} \\
 & \ \ \ \ \ \ \ \ \ \ \ \ \ \ \ \ \ \ \ \ \ \ \ \ \ \ \ \ \ \ \ \ \ \ \ \ \ \ \ \ \ \ \ \ \le \frac{2{{C}_{2}}}{{{\varepsilon }^{2}}}\sum\limits_{k=1}^{\infty }{\hat{E}\left( {{\left| {{X}_{k}} \right|}^{2}} \right).}
\end{align*}
By Borel-Cantelli lemma, we can get
\begin{center}
\(\underset{{{n}_{k}}<j\le {{n}_{k+1}}}{\mathop{\max }}\,\left| {{S}_{j}}-{{S}_{{{n}_{k}}}} \right|\to 0\ a.s.\ V.\)
\end{center}
Hence,\\
\centerline{\({{S}_{n}}\to S\ a.s.\ V\).}
\end{proof}
\textbf{Acknowledgements}: This work was supported by National Natural Science Foundation of China (Grant No. 11225104) and the Fundamental Research Funds for the Central Universities.

\small

\end{document}